\tikzstyle{block} = [draw, thick, node distance=0.5cm, minimum width=1cm, inner sep=6pt]
\tikzstyle{sum} = [draw, thick, circle, node distance=1cm, inner sep=3.5pt, path picture={\node at (path picture bounding box.center) [draw, anchor = center] {$+$};}]
\newtheorem{lemma}{Lemma}
\newtheorem{theorem}{Theorem}
\newtheorem{proposition}{Proposition}
\newtheorem{remark}{Remark}
\newcommand{\nxfull}{n_{\mathrm{f}}}
\newcommand{\nxreduced}{n_{\mathrm{r}}}
\newcommand{\ninput}{n_{\mathrm{u}}}
\newcommand{\npsi}{n_{\psi}}
\newcommand{\nz}{n_{\mathrm{z}}}
\newcommand{\nw}{n_{\mathrm{w}}}
\title{\LARGE \bf
Robust reduced-order model predictive control\\ using
peak-to-peak analysis of filtered signals}
\author{Johannes Köhler$^{1,2}$, Carlo Scholz$^1$, Melanie Zeilinger$^1$%
\thanks{$^1$Institute for Dynamic Systems and Control,
ETH Zürich, Switzerland.}
\thanks{$^2$Department of Mechanical Engineering, Imperial College London, London, UK. Email: \texttt{j.kohler@imperial.ac.uk}}
}
\begin{document}

\maketitle
\thispagestyle{empty}
\pagestyle{empty}
\begin{abstract}
We address the design of a model predictive control (MPC) scheme for large-scale linear systems using reduced-order models (ROMs). 
Our approach uses a ROM, leverages tools from robust control, and integrates them into an MPC framework to achieve computational tractability with robust constraint satisfaction. 
Our key contribution is a method to obtain guaranteed bounds on the predicted outputs of the full-order system by predicting a (scalar) error-bounding system alongside the ROM. 
This bound is then used to formulate a robust ROM-based MPC that guarantees constraint satisfaction and robust performance.
Our method is developed step-by-step by (i) analysing the error, (ii) bounding the peak-to-peak gain, an (iii) using filtered signals. 
We demonstrate our method on a 100-dimensional mass-spring-damper system, achieving over four orders of magnitude reduction in conservatism relative to existing approaches.
\end{abstract}
\ifbool{arxiv}{\begin{keywords}
Predictive control for linear systems; 
Reduced order modeling;
Robust control
\end{keywords}}{}
\section{Introduction}
\label{sec:intro}
Model predictive control (MPC) is an advanced control method that leverages online optimization and model-based prediction to satisfy safety-critical constraints~\cite{rawlings2017model}. 
MPC generates feedback by solving a finite-horizon optimal control problem and thus computational demand is an important factor for industrial applications. 
In the last decades, advances in numerical algorithms (cf., e.g.,~\cite{kouzoupis2018recent}) have expanded the applicability of MPC from slow systems in process control~\cite{qin2003survey} to faster systems, such as robots~\cite{katayama2023model} or power electronics~\cite{rodriguez2021latest}. 
However, application to large-scale systems, such as power grids, deformable structures, and partial differential equations, remains computationally challenging. 
Lower-dimensional \emph{reduced-order} models (ROMs)~\cite{antoulas2005overview} are a promising approach to alleviate this issue. 
Applicability of ROM-based MPC has been empirically demonstrated, for example, in process control~\cite{huisman2003identification,marquez2013model}, soft robotics~\cite{tonkens2021soft}, and humanoid locomotion~\cite{ghansah2025hierarchical}. 
However, ROMs also introduce significant prediction errors, which can be problematic in safety-critical applications. 
In this paper, we propose a systematic framework to rigorously account for prediction errors induced by ROMs, thereby guaranteeing the \emph{robust} satisfaction of safety-critical constraints. 

\subsection{Related work}

\subsubsection*{Error bounds for reduced-order models}
Classical error bounds for ROMs involve, e.g., $\mathcal{H}_\infty$ or $\mathcal{H}_2$ bounds of balanced truncation~\cite{antoulas2005overview}. Similarly, $(\gamma,\delta)$-similarity characterizations have recently been utilized for ROMs~\cite{bajaj2025online}.
While such bounds are well-suited for analysing \emph{stability}, they do \emph{not} provide point-wise bounds required for robust predictions. 
Point-wise bounds for nonlinear ROMs are, e.g., derived in~\cite{wirtz2014posteriori}. 
However, these bounds may grow exponentially over the prediction horizon since they rely on the (one-sided) Lipschitz constant.
\subsubsection*{Robust MPC with reduced-order models}
A key challenge in designing a predictive controller with ROMs is ensuring satisfaction of safety-critical constraints despite the prediction error. 
A common approach in \emph{robust MPC} is to assume a uniform bound on the model mismatch, which enables simple bounds on the prediction error~\cite{kouvaritakis2016model}. 
In a similar spirit, compact state and input constraints can be utilized to derive uniform bounds on the prediction error with ROMs~\cite{sopasakis2013constrained,kogel2015robust,lorenzetti2022linear,wang2022tube}. 
However, the prediction error of ROMs depends primarily on the operation of the system and neglecting this dependency through uniform error bounds can be overly conservative. 
To address this limitation, input-dependent prediction errors for linear and nonlinear ROMs are derived in~\cite{loehning2014model}
and \cite{alora2023robust}, respectively. 
These approaches yield robust MPC schemes that can be efficiently implemented by predicting a scalar error-bounding system alongside the ROM. 
However, these error bounding systems rely on hyper-parameters which need to be carefully tuned. 
Our work also derives a stable error-bounding system, but leverages robust control theory to enable a systematic and less conservative design. 

\subsubsection*{Robust control for dynamic uncertainties}
Reduced-order modelling results in unmodelled dynamics, which correspond to \emph{dynamic uncertainties} in robust control~\cite[Chap.~7]{zhou1998essentials}. 
Robust MPC designs for dynamic uncertainty are, e.g., presented in~\cite{falugi2014getting,lovaas2008robust}. 
Similar to the ROM-based MPC schemes~\cite{sopasakis2013constrained,kogel2015robust,lorenzetti2022linear,wang2022tube}, these works use uniform bounds on the prediction error based on compact constraints. 
More recently, input-dependent prediction error bounds with dynamic uncertainties have been utilized in~~\cite{schwenkel2022model,schwenkel2025output}. 
These approaches leverage (time-domain) \emph{integral quadratic constraints (IQCs)}~\cite{scherer2022dissipativity}, a general formulation to model and characterize dynamic uncertainty. 
IQCs use dynamic filters to characterize the input-output behavior of the dynamic uncertainty, naturally encompassing classical robust analysis tools, such as small-gain or dissipativity.
Input-dependent error bounds are then obtained by analysing the worst-case \emph{peak-to-peak gain}~\cite{schwenkel2023robust,schwenkel2025multi}. Our approach adapts some of these tools for ROMs.

\subsection{Contribution}
We consider the design of a robust MPC scheme for large-scale systems using reduced-order models (ROMs).  
Our results apply to linear continuous-time systems subject to hard input and output constraints. 
Our main contribution is a scalar error-bounding system that robustly quantifies the deviation between the prediction of the ROM and the full-order system depending on the predicted trajectory. 
We gradually build up this error-bounding system by analysing the peak-to-peak gain of the error dynamics and using dynamic filters to introduce a suitable weighting. 
We provide a ROM-based MPC formulation that guarantees constraint satisfaction and a performance bound for the full-order system. 
We demonstrate the effectiveness of our approach through a numerical example involving a $100$-dimensional mass-spring-damper system. 
We provide a qualitative and quantitative comparison to existing bounds for ROMs~\cite{loehning2014model,lorenzetti2022linear}, showing improvements of over four orders of magnitude. 
\ifbool{arxiv}{As a result of independent interest, Appendix~\ref{app:IQC} adapts the results in~\cite{schwenkel2022model,schwenkel2023robust} to
 provide robust prediction bounds for uncertain linear continuous-time systems based on IQCs and discusses the relation to the proposed methodology.}{}

\subsubsection*{Notation}
For a vector $x\in\mathbb{R}^n$, we denote the Euclidean norm by $\|x\|=\sqrt{x^\top x}$ and the infinity norm by $\|x\|_\infty$. 
For a positive semi-definite matrix $P\in\mathbb{R}^{n\times n}$, we denote $\|x\|_P^2=x^\top P x$. 
We denote the identity matrix by $I_n\in\mathbb{R}^{n\times n}$, the zero matrix by $0_{n,m}\in\mathbb{R}^{n\times m}$, and the vector of ones by $\mathbf{1}_n\in\mathbb{R}^n$, where we also drop the subindex if the dimensions are clear from the context. 
For a matrix $A=A^\top\in\mathbb{R}^{n\times n}$, $A\succ 0$ ($\succeq 0$) denotes that the matrix is positive definite (positive semi-definite). 
Symmetric forms $A^\top P A$ are abbreviated by $[\star]^\top P A$. 
For a signal $x(t)\in\mathbb{R}^n$, $t\in\mathbb{R}$, we denote the time-derivative at time $t$ by $\dot{x}(t)\in\mathbb{R}^n$. 
The peak-norm of a signal 
$x(t)\in\mathbb{R}^n$ is denoted by $\|x\|_{\mathrm{peak}}=\sup_{t\geq 
 0}\|x(t)\|$. 

\section{Problem setup}
\label{sec:setup}
We consider a linear continuous-time system
\begin{align}
\label{eq:sys}
\Sigma: 
\begin{cases}\dot{x}(t)=Ax(t)+Bu(t)+Ew(t), ~x(0)=x_0,\\
z(t)=Cx(t),
\end{cases}
\end{align}
with state $x(t)\in\mathbb{R}^{\nxfull}$, control input $u(t)\in\mathbb{R}^{\ninput}$, disturbances $w(t)\in\mathbb{R}^{\nw}$, controlled output $z(t)\in\mathbb{R}^{\nz}$, and time $t\geq 0$. 
The disturbances satisfy $w(t)\in\mathcal{W}=\{w\in\mathbb{R}^{\nw}|~\|w\|\leq \bar{w}\}$, $\forall t\geq 0$. 
Our goal is to track a desired reference trajectory $z_{\mathrm{ref}}(t)\in\mathbb{R}^{\nz}$ while robustly satisfying input and output constraints
\begin{align}
\label{eq:constraints}
z(t)\in\mathcal{Z},~u(t)\in\mathcal{U},~\forall t\geq 0,~\forall w(t)\in\mathcal{W}.
\end{align}
The constraints are given by
\begin{align}
\label{eq:constraint_con}
\mathcal{Z}=\{z\in\mathbb{R}^{\nz}|~g_{j}(z)\leq 0,~j=0,\dots,n_{\mathrm{c}}\} 
\end{align}
with $g_j(z)$ Lipschitz continuous with constants $L_{\mathrm{g},j}\geq 0$. 
These conditions include polytopic and conic constraints $\mathcal{Z}$. 

\subsubsection*{Simplifying conditions}
In this paper, we assume that: 
The system~\eqref{eq:sys} is open-loop stable, i.e., $A$ is Hurwitz, and we only optimize open-loop inputs.\footnote{This can be naturally relaxed by using an affine parametrization $u(t)=Kx(t)+v(t)$ with an offline optimized feedback $K$, see Remark~\ref{rk:feedback}.}
The model equations $(A,B,E,C)$ and the initial condition $x_0\in\mathbb{R}^{\nxfull}$ are perfectly known.
We only consider operation over a finite-horizon $T\geq 0$. 
The control input $u(t)$ and the disturbances $w(t)$ are (piece-wise) continuous.

\subsubsection*{Full-order optimal control problem}
The problem reduces to the following (standard) finite-horizon constrained robust optimal control problem
\begin{subequations}
\label{eq:OCP}
\begin{align}
\label{eq:OCP_cost}
\min_{u(\cdot)\in\mathcal{U}}\max_{w(\cdot)\in\mathcal{W}}&\int_{t=0}^T \|z(t)-z_{\mathrm{ref}}(t)\|_\infty+R\cdot\|u(t)\|_\infty~\mathrm{d}t\\
\text{s.t. } &\eqref{eq:sys},~\eqref{eq:constraints},~\forall w(t)\in\mathcal{W},~t\in[0,T],
\end{align}
\end{subequations}
where $R>0$ is the input weighting. The setup can be naturally generalized to any Lipschitz continuous cost. 
The key challenge is that the dimension of the state $x(t)\in\mathbb{R}^{\nxfull}$ is very large, i.e., $\nxfull\gg 1$. This makes Problem~\eqref{eq:OCP} computationally intractable.

\subsubsection*{Proposed approach \& Outline}
We determine a feasible, but possibly suboptimal, input trajectory $u(t)$ for Problem~\eqref{eq:OCP} by leveraging ROMs and robust control tools.
We first introduce the ROM and characterize the error between the full-order system and the ROM (Sec.~\ref{sec:ROM}). 
Then, we derive an error-bounding system for the ROM by analysing the peak-to-peak gain (Sec.~\ref{sec:peak}). 
We extend this methodology to leverage the peak-to-peak gain based on filtered signals (Sec.~\ref{sec:filter}). 
Based on this, we introduce a ROM-based MPC formulation that ensures robust constraint satisfaction for the full-order system (Sec.~\ref{sec:mpc}). 
We qualitatively compare this approach to existing methods (Sec.~\ref{sec:discussion}). 
Lastly, we provide quantitative comparisons with a numerical example involving a $100$-dimensional mass-spring-damper system (Sec.~\ref{sec:num}).

\section{Reduced-order models \\ and dynamic uncertainty}
\label{sec:ROM}
We construct a ROM with state $x_{\mathrm{r}}(t)\in\mathbb{R}^{\nxreduced}$, where $\nxreduced\ll \nxfull$. 
We use a standard Petrov-Galerkin projection with matrices $W,V\in\mathbb{R}^{\nxfull\times\nxreduced}$ satisfying $W^\top V=I_{\nxreduced}$. 
The nominal ROM is given by
\begin{align}
\label{eq:sys_ROM}
\Sigma_{\mathrm{r}}
\begin{cases}
&\dot{x}_{\mathrm{r}}(t)=A_{\mathrm{r}} x_{\mathrm{r}}(t)+B_{\mathrm{r}} u(t),~x_{\mathrm{r}}(0)=x_{\mathrm{r},0},\\
&z_{\mathrm{r}}(t)=C_{\mathrm{r}} x_{\mathrm{r}}(t),
\end{cases}
\end{align}
with $A_{\mathrm{r}}=W^\top A V$, $B_{\mathrm{r}}=W^\top B$, $C_{\mathrm{r}}=CV$, and initial condition $x_{\mathrm{r},0}=W^\top x_0$. 
To study the difference between the two models we define the full-order (lifted) error: 
\begin{align}
\label{eq:error_def} 
e(t):=&x(t)-Vx_{\mathrm{r}}(t)\in\mathbb{R}^{\nxfull},~t\geq 0.
\end{align}
\begin{proposition}[Error dynamics]
\label{prop:error_dyn}
The full-order system $\Sigma$~\eqref{eq:sys} is equivalent to the interconnection between the reduced-order system $\Sigma_{\mathrm{r}}$~\eqref{eq:sys_ROM} and the following error dynamics
\begin{align}
\label{eq:error_dyn}
\Delta: \begin{cases}
\dot{e}(t)=Ae(t)+d(t)+Ew(t),~e(0)=e_0,\\
d(t)=(I_{\nxfull}-VW^\top)(AVx_{\mathrm{r}}(t)+Bu(t)),\\
z_{\mathrm{e}}(t)=Ce(t),
\end{cases}
\end{align}
with $e_0=(I_{\nxfull}-VW^\top)x_0$ and $z(t)=z_{\mathrm{r}}(t)+z_{\mathrm{e}}(t)$.
\end{proposition}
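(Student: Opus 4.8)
The plan is to prove the claim by direct verification: I take $e(t)$ as \emph{defined} in~\eqref{eq:error_def}, differentiate it, substitute the dynamics of both $\Sigma$ and $\Sigma_{\mathrm{r}}$, and regroup the terms into the asserted form~\eqref{eq:error_dyn}. The output relation and the initial condition are then immediate, and the stated \emph{equivalence} follows from uniqueness of solutions of the linear ODE.

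First I would check the initial condition: by~\eqref{eq:error_def} and $x_{\mathrm{r},0}=W^\top x_0$ we get $e(0)=x_0-Vx_{\mathrm{r},0}=(I-VW^\top)x_0=e_0$. Next, differentiating~\eqref{eq:error_def} gives $\dot e=\dot x-V\dot x_{\mathrm{r}}$; inserting~\eqref{eq:sys} and~\eqref{eq:sys_ROM} together with $A_{\mathrm{r}}=W^\top AV$, $B_{\mathrm{r}}=W^\top B$ yields
\begin{align*}
\dot e=Ax+Bu+Ew-VW^\top AVx_{\mathrm{r}}-VW^\top Bu.
\end{align*}
The key algebraic step is to eliminate $x$ in favour of the error by substituting $x=e+Vx_{\mathrm{r}}$, which produces the term $Ae$ plus the residuals $AVx_{\mathrm{r}}-VW^\top AVx_{\mathrm{r}}$ and $Bu-VW^\top Bu$. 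Factoring $(I-VW^\top)$ out of these residuals collects them exactly into $d(t)=(I-VW^\top)(AVx_{\mathrm{r}}+Bu)$, giving $\dot e=Ae+d+Ew$ as claimed.

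For the output I would simply write $z=Cx=C(e+Vx_{\mathrm{r}})=Ce+CVx_{\mathrm{r}}=z_{\mathrm{e}}+z_{\mathrm{r}}$ using $C_{\mathrm{r}}=CV$, which establishes $z=z_{\mathrm{r}}+z_{\mathrm{e}}$. This shows that the signal $e$ extracted from a solution of $\Sigma$ solves $\Delta$; conversely, if $x_{\mathrm{r}}$ solves $\Sigma_{\mathrm{r}}$ and $e$ solves the cascade $\Delta$, then $Vx_{\mathrm{r}}+e$ satisfies~\eqref{eq:sys} with initial value $x_0$ (the $VW^\top$ terms cancel), so by uniqueness of solutions—guaranteed by linearity and the assumed piecewise continuity of $u,w$—it equals $x$, which is precisely the asserted equivalence. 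I do not anticipate a genuine obstacle, as the argument is purely algebraic; the only point worth flagging is that this identity does \emph{not} require the Petrov--Galerkin condition $W^\top V=I_{\nxreduced}$ (that condition instead makes $VW^\top$ a projection and is exploited later), so I would double-check that none of the regrouping silently invokes it.
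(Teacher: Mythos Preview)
Your proof is correct and follows essentially the same route as the paper: differentiate $e=x-Vx_{\mathrm{r}}$, substitute the dynamics of $\Sigma$ and $\Sigma_{\mathrm{r}}$, and factor out $(I-VW^\top)$ using $A_{\mathrm{r}}=W^\top AV$, $B_{\mathrm{r}}=W^\top B$. If anything, you are slightly more thorough than the paper---you explicitly verify the output decomposition $z=z_{\mathrm{r}}+z_{\mathrm{e}}$ and the converse direction of the equivalence, and your observation that $W^\top V=I_{\nxreduced}$ is not used here is correct and worth keeping.
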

\begin{proof}
Given~\eqref{eq:sys} and \eqref{eq:sys_ROM}, the error~\eqref{eq:error_def} satisfies
\begin{align*}
&\dot{e}(t)=\dot{x}(t)-V\dot{x}_{\mathrm{r}}(t)\\
=&Ax(t)+Bu(t)+Ew(t)-V(A_{\mathrm{r}}x_{\mathrm{r}}(t)+B_{\mathrm{r}}u(t))\\
=&Ae(t)+(AV-VA_{\mathrm{r}})x_{\mathrm{r}}(t)+(B-VB_{\mathrm{r}})u(t) +Ew(t)\\
=&Ae(t)+(I_{\nxfull}-VW^\top )(A Vx_{\mathrm{r}}(t)+Bu(t))+Ew(t),
\end{align*}
with $e_0=x_0-Vx_{\mathrm{r},0}=(I_{\nxfull}-VW^\top)x_0$.
\end{proof}
The error $e(t)$ is driven by the model-mismatch $d(t)$, which depends also on the reduced-order state $x_{\mathrm{r}}(t)$ and the control input $u(t)$. 
The factor $I_{\nxfull}-VW^\top$ corresponds to the residual of oblique projection, which vanishes as $\nxreduced\rightarrow \nxfull$. 
With Proposition~\ref{prop:error_dyn}, we can characterize the full-order system $\Sigma$ as an interconnection of the reduced-order system $\Sigma_{\mathrm{r}}$ and the unmodelled dynamics $\Delta$. 

\begin{remark}[Choice of decomposition]
The decomposition of the full-order system in a ROM $\Sigma_{\mathrm{r}}$ and unmodelled dynamics $\Delta$ is non-unique. 
We considered the \emph{nominal} ROM~\eqref{eq:sys_ROM}, which reduces robust predictions to deriving a bound on the output of $\Delta$ driven by external inputs $d(t)$ and $w(t)$. 
The problem could be equivalently represented by a feedback interconnection of the $n_{\mathrm{r}}$ reduced-order states $W^\top x(t)\neq x_{\mathrm{r}}(t)$ and the remaining $n_{\mathrm{f}}-n_{\mathrm{r}}$ unmodelled states, both perturbed by external signals~\cite[Lemma~2]{alora2023robust}. This equivalent decomposition can be more natural for receding-horizon implementation, which is, however, not the focus of this work. 
\end{remark}

\section{Peak-to-peak analysis \& robust prediction}
\label{sec:peak}
In this section, we derive a robust prediction error bound for the ROM~\eqref{eq:sys_ROM}, which depends on the applied input $u(t)$ and the predicted state $x_{\mathrm{r}}(t)$. 
Specifically, we write the error dynamics~\eqref{eq:error_dyn} equivalently as:
\begin{align}
\label{eq:error_dyn_input}
\dot{e}(t)=&Ae(t)+B_{\mathrm{e}}r(t),~e(0)=e_0,\\
r(t)=&[w(t)^\top,x_{\mathrm{r}}(t)^\top,u(t)^\top]^\top\in\mathbb{R}^{\nw+\nxreduced+\ninput},\nonumber\\
z_{\mathrm{e}}(t)=&Ce(t),\nonumber\\
B_{\mathrm{e}}=&\begin{bmatrix}
E&
(I_{\nxfull}-VW^\top)\begin{bmatrix}
AV&B 
\end{bmatrix}
\end{bmatrix}, \nonumber
\end{align}
with the lumped input $r(t)$. 
Determining the magnitude of the output error $z_{\mathrm{e}}(t)$ depending on the magnitude of the input $r(t)$ can be cast as analysing the peak-to-peak norm, i.e., determining a constant $\gamma>0$, such that 
\begin{align}
\label{eq:peak_norm} 
\|z_{\mathrm{e}}\|^2_{\mathrm{peak}}\leq \gamma^2\|r\|_{\mathrm{peak}}^2+c\|e_0\|^2, c\geq 0,
\end{align}
 for any input signal $r(t)$.
As we demonstrate later in Proposition~\ref{prop:peak_gain}, this problem can be addressed with the following optimization problem, adapted from~\cite[Prop.~3.16]{scherer2000linear}: 
\begin{subequations} 
\label{eq:LMI_peak}
\begin{align}
\label{eq:LMI_peak_cost}
\min_{P,\lambda,\gamma}&~~\gamma\\
\label{eq:LMI_peak_Lyap}
\text{s.t. } &
\left[ \vphantom{
\begin{bmatrix}
x\\x\\x
\end{bmatrix} } 
 \star \right]^{\top}
\begin{bmatrix}\lambda P &P&0\\P&0&0\\0&0&-\gamma I\end{bmatrix}
\begin{bmatrix}
I_{\nxfull}&0\\
A&B_{\mathrm{e}}\\
0&I
\end{bmatrix}\preceq 0, \\
\label{eq:LMI_peak_output}
&\begin{bmatrix}
\gamma I_{\nz}&C\\
C^\top & \lambda P
\end{bmatrix}\succeq 0,~P\succeq 0.
\end{align}
\end{subequations}
For a fixed constant $\lambda>0$, Problem~\eqref{eq:LMI_peak} is a semi-definite program, which can be efficiently solved. The exponential decay rate $\lambda>0$ is determined using an additional linesearch.
Feasibility of Problem~\eqref{eq:LMI_peak} follows from $A$ Hurwitz, with a suitable choice of $\lambda.$ 
This LMI yields the Lyapunov function $V(t)=\|e(t)\|_P^2$, which we use to sequentially determine a bound on $z_{\mathrm{e}}(t)$ in the following proposition. 
\begin{proposition}
\label{prop:peak_gain} 
The following scalar error-bounding system
\begin{align}
\label{eq:error_bounding_system}
\dot{\delta}(t)=-\lambda \delta(t)+\gamma \|r(t)\|^2,~\delta(0)=V(0)
\end{align}
satisfies
\begin{align}
\label{eq:peak_error_bound}
\dfrac{1}{\lambda \gamma}\|z_{\mathrm{e}}(t)\|^2\leq V(t)\leq \delta(t),~\forall t\geq 0. \end{align} 
Furthermore, the error dynamics~\eqref{eq:error_dyn_input} have a peak-to-peak norm smaller or equal to $\gamma\geq 0$, i.e., \eqref{eq:peak_norm} holds.
\end{proposition}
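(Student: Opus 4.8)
The plan is to read the two matrix inequalities in~\eqref{eq:LMI_peak} as, respectively, a dissipation inequality and an output-bounding condition for the storage function $V(t)=\|e(t)\|_P^2=e(t)^\top P e(t)$, and then to transfer both properties to the scalar system~\eqref{eq:error_bounding_system} through a comparison argument.

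First I would unpack the congruence structure in~\eqref{eq:LMI_peak_Lyap}. Contracting the inequality with $[e(t)^\top,r(t)^\top]^\top$, the outer factor sends $(e,r)$ to $(e,\dot e,r)$ once we substitute $\dot e=Ae+B_{\mathrm{e}}r$ from~\eqref{eq:error_dyn_input}. Carrying out the middle product produces the scalar $\lambda\,e^\top P e+2\,e^\top P\dot e-\gamma\|r\|^2$; recognizing $\tfrac{\mathrm{d}}{\mathrm{d}t}V=2\,e^\top P\dot e$, the LMI is \emph{exactly} the dissipation inequality $\dot V(t)\le -\lambda V(t)+\gamma\|r(t)\|^2$ along every trajectory. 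I expect this to be the main step: seeing that the congruence transformation reproduces $\dot V+\lambda V-\gamma\|r\|^2$ is what unlocks the rest, and it is essentially the only place where the specific structure of the outer factor is used.

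Next, for the left inequality in~\eqref{eq:peak_error_bound} I would apply a Schur complement to~\eqref{eq:LMI_peak_output} with respect to the $(1,1)$-block $\gamma I\succ 0$, which yields $\lambda P-\tfrac{1}{\gamma}C^\top C\succeq 0$, i.e.\ $\lambda\gamma P\succeq C^\top C$. Contracting with $e(t)$ gives $\tfrac{1}{\lambda\gamma}\|z_{\mathrm{e}}(t)\|^2=\tfrac{1}{\lambda\gamma}e^\top C^\top C e\le e^\top P e=V(t)$. For the right inequality $V(t)\le\delta(t)$, I would invoke the comparison lemma: $\delta$ obeys~\eqref{eq:error_bounding_system} with equality and $\delta(0)=V(0)$, while $V$ obeys the same linear dynamics with ``$\le$''. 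Setting $\varepsilon=\delta-V$, the two relations give $\dot\varepsilon\ge-\lambda\varepsilon$ with $\varepsilon(0)=0$, so $\tfrac{\mathrm{d}}{\mathrm{d}t}(e^{\lambda t}\varepsilon)\ge 0$ and hence $\varepsilon(t)\ge 0$ for all $t\ge 0$; here the only subtlety is keeping track of the sign of the inequality.

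Finally, for the peak-to-peak claim~\eqref{eq:peak_norm} I would integrate the linear $\delta$-dynamics, $\delta(t)=e^{-\lambda t}\delta(0)+\gamma\int_0^t e^{-\lambda(t-s)}\|r(s)\|^2\,\mathrm{d}s$, bound $\|r(s)\|^2\le\|r\|_{\mathrm{peak}}^2$ to obtain $\delta(t)\le\delta(0)+\tfrac{\gamma}{\lambda}\|r\|_{\mathrm{peak}}^2$, and combine with the left inequality to reach $\|z_{\mathrm{e}}(t)\|^2\le\lambda\gamma\,\delta(t)\le\gamma^2\|r\|_{\mathrm{peak}}^2+\lambda\gamma\,V(0)$. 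Taking the supremum over $t\ge 0$ and bounding $V(0)=\|e_0\|_P^2\le\lambda_{\max}(P)\|e_0\|^2$ then gives~\eqref{eq:peak_norm}, and in particular a peak-to-peak gain of $\gamma$, with $c=\lambda\gamma\,\lambda_{\max}(P)$.
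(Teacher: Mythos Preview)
Your proposal is correct and follows essentially the same approach as the paper: contract~\eqref{eq:LMI_peak_Lyap} with $[e^\top,r^\top]$ to get the dissipation inequality $\dot V\le-\lambda V+\gamma\|r\|^2$, Schur-complement~\eqref{eq:LMI_peak_output} for the output bound, compare $V$ with $\delta$, and integrate~\eqref{eq:error_bounding_system} for the peak-to-peak estimate. Your comparison step via the integrating factor $e^{\lambda t}\varepsilon$ is in fact a bit more carefully stated than the paper's contact-point ``induction'' argument, and you additionally make the constant $c=\lambda\gamma\,\lambda_{\max}(P)$ explicit, but the overall route is the same.
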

\begin{proof}
Multiplying~\eqref{eq:LMI_peak_Lyap} from left and right by $[e(t)^\top,~ r(t)^\top]$ and its transposed yields
\begin{align}
\label{eq:peak_lyap_decrease}
 \lambda V(t)+\dot{V}(t) - \gamma \|r(t)\|^2\leq 0.
\end{align}
The LMI~\eqref{eq:LMI_peak_output} can be equivalently reformulated using a Schur complement (cf.~\cite{boyd1994linear}) as:
\begin{align*}
-\lambda P +C^\top (\gamma I_{\nz})^{-1}C
\preceq 0.
\end{align*}
Multiply with $e(t)$ and its transposed from left and right yields
\begin{align*}
 -\lambda V(t)+\dfrac{1}{\gamma}\|z_{\mathrm{e}}(t)\|^2\leq 0,
\end{align*}
i.e., the first inequality in~\eqref{eq:peak_error_bound} holds. 
We show that $V(t)\leq \delta(t)$ $\forall t\geq 0$ by using the comparison lemma. 
For $t=0$, we have 
$V(0)=\delta(0)$. 
To show $V(t)\leq \delta(t)$ $\forall t\geq 0$, it suffices if $V(t)=\delta(t)$ implies $\dot{V}(t)\leq \dot{\delta}(t)$. This follows from~\eqref{eq:error_bounding_system} and \eqref{eq:peak_lyap_decrease}.
The peak-to-peak gain~\eqref{eq:peak_norm} holds by 
\begin{align*}
\dfrac{1}{\lambda \gamma}\|z_{\mathrm{e}}(t)\|^2\stackrel{\eqref{eq:peak_error_bound}}{\leq} \delta(t)\stackrel{\eqref{eq:error_bounding_system}}{\leq} \exp(-\lambda t)\|e_0\|_P^2+\dfrac{\gamma}{\lambda}\|r\|^2_{\mathrm{peak}}
\end{align*}
and setting $c>0$ to the maximum eigenvalue of $P$.
\end{proof}
The resulting error-bounding system is comparable to~\cite[Thm.~1]{loehning2014model}. 
The main difference is that we have a systematic way to determine the constants $\lambda,\gamma$, by minimizing the peak-to-peak gain in~\eqref{eq:LMI_peak_cost}.

\section{Robust predictions using peak-to-peak analysis of filtered signals}
\label{sec:filter}
In this section, we extend the robust prediction based on the peak-to-peak analysis from Proposition~\ref{prop:peak_gain} by introducing an additional dynamic filter.

\subsubsection*{Motivation for considering filtered signals}
The error-bounding system~\eqref{eq:error_bounding_system} determines a bound on the prediction error $z_{\mathrm{e}}(t)$ based on the magnitude of the lumped input $r(t)=[w(t)^\top,~x_{\mathrm{r}}(t)^\top,~u(t)^\top]^\top\in\mathbb{R}^{\nw+\nxreduced+\ninput}$. 
Compared to existing uniform bounds for ROMs~\cite{sopasakis2013constrained,kogel2015robust,lorenzetti2022linear,wang2022tube}, this captures the fact that the prediction error decreases if the magnitude of the excitation $r(t)$ decreases.
However, even with the same magnitude of $r(t)$, the prediction error can vary drastically. 
For example, ROMs are often chosen based on the slowest modes and thus the magnitude of the prediction error also depends on the frequency components of the excitation $r(t)$. 
This effect cannot be captured by the approach in Section~\ref{sec:peak}, which determines a bound proportional to the peak norm of $r(t)$. 
In the following, we generalise this by instead considering a filtered signal $r_\psi(t)$. By suitably choosing the filter, this can better capture the dependence of the prediction error on the excitation, see Sections~\ref{sec:discussion} and \ref{sec:num} for an application to a numerical example. 
\subsubsection*{Dynamic filter}
We consider a general linear dynamic filter 
\begin{align}
\label{eq:filter}
\Psi
\begin{cases}
&~\dot{\psi}(t)=A_\psi \psi(t)+B_\psi r(t),~\psi(0)=0,\\
&r_{\psi}(t)=C_{\psi}\psi(t) +D_{\psi}r(t),
\end{cases}
\end{align}
which generates a filtered output $r_\psi(t)\in\mathbb{R}^{n_{\psi,\mathrm{y}}}$ based on the lumped input $r(t)$.
The filter is a design choice.  
We choose a stable filter, i.e., $A_\psi\in\mathbb{R}^{n_\psi\times n_\psi}$ is Hurwitz. 
Furthermore, $B_\psi$ is chosen such that the disturbance $w(t)$ does not affect the filter state $\psi(t)$, but is simply passed through to the output $r_\psi(t)$ with the direct-feed-through $D_\psi$, i.e., 
\begin{align}
\label{eq:filter_structure}
[I_{\nw},0]r_\psi(t)=w(t),~\begin{bmatrix}B_\psi\\ \begin{bmatrix}0_{\nw}& I_{\nxreduced+\ninput}\end{bmatrix}D_\Psi\end{bmatrix}
\begin{bmatrix}w(t)\\0\end{bmatrix}=0.
\end{align}
This choice does not affect the following analysis, but is crucial for implementability of the MPC formulation in Section~\ref{sec:mpc} later.

We write the augmented system consisting of the error dynamics~\eqref{eq:error_dyn_input} and the filter~\eqref{eq:filter} as
\begin{align}
\label{eq:dyn_aug_error_filter}
\dot{\chi}(t)=&A_\chi\chi(t)+B_\chi r(t),~\chi(0)=[e_0^\top,~0]^\top,\\
r_\psi(t)=&C_\chi\chi(t)+D_\chi r(t),\nonumber\\
z_{\mathrm{e}}(t)=&C_{\mathrm{z}}\chi(t),\nonumber\\
&A_\chi=\begin{bmatrix}A&\\&A_\psi\end{bmatrix}, ~
B_\chi=\begin{bmatrix}B\\B_\psi \end{bmatrix},\nonumber\\
&C_\chi=[0,C_\psi],~D_\chi=D_\psi,~C_{\mathrm{z}}=[C,~0].\nonumber
\end{align}
We are interested in a peak-to-peak gain from the filtered input $r_\psi(t)$ to the output error $z_{\mathrm{e}}(t)$. 
The corresponding reformulation is given by the optimization problem below:
\begin{subequations} 
\label{eq:LMI_peak_filter}
\begin{align}
\label{eq:LMI_peak_filter_cost}
&\min_{P_\chi,\lambda_\chi,\gamma_\chi}~~\gamma_\chi\\
\label{eq:LMI_peak_filter_Lyap}
\text{s.t. } &
\left[ \vphantom{
\begin{bmatrix}
x\\x\\x
\end{bmatrix} } 
 \star \right]^{\top}
\begin{bmatrix}\lambda_\chi P_\chi &P_\chi&\\P_\chi&0&\\&&-\gamma_\chi I_{n_{\psi,\mathrm{y}}}\end{bmatrix}
\begin{bmatrix}
I_{\nxfull+n_\psi}&0\\
A_\chi&B_\chi\\
C_\chi&D_\chi
\end{bmatrix}\preceq 0, \\
\label{eq:LMI_peak_filter_output}
&\begin{bmatrix}
\gamma_\chi I_{\nz}&C_{\mathrm{z}}\\
C_{\mathrm{z}}^\top & \lambda_\chi P_\chi
\end{bmatrix}\succeq 0,~P_\chi\succeq 0.
\end{align}
\end{subequations}
Since $r_\psi(t)$ is an output of the system~\eqref{eq:dyn_aug_error_filter}, we technically investigate an output-to-output gain~\cite{arnstrom2025efficiently}.\footnote{%
For invertible filters $\Psi$, it may be possible to rewrite the conditions equivalently as a standard peak-to-peak gain from an input to an output. However, this might complicate implementation of the proposed MPC formulation in Section~\ref{sec:mpc}.} 
In the following, we suppose that Problem~\eqref{eq:LMI_peak_filter} admits a feasible solution and define $V_\chi(t)=\chi(t)^\top P_\chi \chi(t)$. 
We consider the following error-bounding system based on the filtered output $r_\psi(t)$:
\begin{align}
\label{eq:filter_error_bounding_system}
\dot{\delta}_\chi(t)=-\lambda_\chi \delta_\chi(t)+\gamma_\chi \|r_\psi(t)\|^2,~\delta_\chi(0)=V_\chi(0).
\end{align}
\begin{proposition}
\label{prop:filter_peak_gain}
It holds that 
\begin{align}
\label{eq:peak_filter_error_bound}
\dfrac{1}{\lambda_\chi \gamma_\chi}\|z_{\mathrm{e}}(t)\|^2\leq V_\chi(t)\leq \delta_\chi(t),~\forall t\geq 0, \end{align} 
and 
\begin{align}
\label{eq:filter_peak_norm} 
\|z_{\mathrm{e}}\|_{\mathrm{peak}}^2\leq \gamma_\chi^2\|r_\psi\|_{\mathrm{peak}}^2+c_\chi\|e_0\|^2,~c_\chi\geq 0.
\end{align}
\end{proposition}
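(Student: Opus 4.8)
The plan is to mirror the proof of Proposition~\ref{prop:peak_gain} essentially verbatim, now applied to the augmented error--filter system~\eqref{eq:dyn_aug_error_filter} in place of the bare error dynamics. First I would left- and right-multiply the Lyapunov LMI~\eqref{eq:LMI_peak_filter_Lyap} by $[\chi(t)^\top,\,r(t)^\top]$ and its transpose. Since the right factor of the congruence maps this vector to $[\chi(t)^\top,\,\dot{\chi}(t)^\top,\,r_\psi(t)^\top]^\top$ --- using $\dot{\chi}=A_\chi\chi+B_\chi r$ and $r_\psi=C_\chi\chi+D_\chi r$ from~\eqref{eq:dyn_aug_error_filter} --- the inner block matrix immediately produces the dissipation inequality
\begin{align*}
\lambda_\chi V_\chi(t)+\dot{V}_\chi(t)-\gamma_\chi\|r_\psi(t)\|^2\leq 0,
\end{align*}
the analogue of~\eqref{eq:peak_lyap_decrease}, where $\dot{V}_\chi=2\chi^\top P_\chi\dot{\chi}$.

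Next I would apply a Schur complement to the output LMI~\eqref{eq:LMI_peak_filter_output} to obtain $-\lambda_\chi P_\chi+\tfrac{1}{\gamma_\chi}C_{\mathrm{z}}^\top C_{\mathrm{z}}\preceq 0$, and then multiply by $\chi(t)$ and its transpose. Using $z_{\mathrm{e}}=C_{\mathrm{z}}\chi$, this yields $\tfrac{1}{\lambda_\chi\gamma_\chi}\|z_{\mathrm{e}}(t)\|^2\leq V_\chi(t)$, the left inequality in~\eqref{eq:peak_filter_error_bound}. The bound $V_\chi(t)\leq\delta_\chi(t)$ then follows by the same comparison argument: equality holds at $t=0$ by the choice $\delta_\chi(0)=V_\chi(0)$, and whenever $V_\chi(t)=\delta_\chi(t)$ the dissipation inequality gives $\dot{V}_\chi\leq-\lambda_\chi V_\chi+\gamma_\chi\|r_\psi\|^2=-\lambda_\chi\delta_\chi+\gamma_\chi\|r_\psi\|^2=\dot{\delta}_\chi$, so the comparison trajectory cannot be overtaken. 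Finally, the peak-to-peak estimate~\eqref{eq:filter_peak_norm} drops out of integrating the linear comparison system~\eqref{eq:filter_error_bounding_system}, namely $\delta_\chi(t)\leq\exp(-\lambda_\chi t)V_\chi(0)+\tfrac{\gamma_\chi}{\lambda_\chi}\|r_\psi\|_{\mathrm{peak}}^2$; combining this with~\eqref{eq:peak_filter_error_bound} and noting that $\chi(0)=[e_0^\top,0]^\top$ (since $\psi(0)=0$) makes $V_\chi(0)$ a quadratic form in $e_0$ bounded by $c_\chi\|e_0\|^2/(\lambda_\chi\gamma_\chi)$ for a suitable $c_\chi\geq 0$ gives the claim.

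The one point that deserves care --- and the only genuine departure from Proposition~\ref{prop:peak_gain} --- is that $r_\psi$ is an \emph{output} of~\eqref{eq:dyn_aug_error_filter} rather than a free exogenous input, so the bounded quantity depends on the driving signal $r$ through the feed-through $D_\chi$. I expect this to be the main thing to verify carefully: the LMI~\eqref{eq:LMI_peak_filter_Lyap} is constructed precisely so that the congruence substitutes $C_\chi\chi+D_\chi r$ into the third block, placing $\|r_\psi\|^2$ --- not $\|r\|^2$ --- in the dissipation inequality. Because the comparison system~\eqref{eq:filter_error_bounding_system} is likewise driven by $\|r_\psi\|^2$, the two right-hand sides match and the argument closes without ever expressing $r_\psi$ back in terms of $r$. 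Apart from this bookkeeping, every step is identical to the unfiltered case.
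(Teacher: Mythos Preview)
Your proposal is correct and follows exactly the approach the paper intends: the paper's own proof is the single sentence ``analogous to Proposition~\ref{prop:peak_gain}, using the augmented dynamics~\eqref{eq:dyn_aug_error_filter},'' and you have faithfully unpacked that analogy step by step. Your observation that the congruence with~\eqref{eq:LMI_peak_filter_Lyap} produces $\|r_\psi\|^2$ rather than $\|r\|^2$ in the dissipation inequality is precisely the point that distinguishes this from the unfiltered case, and you handle it correctly.
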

\begin{proof}
The proof is analogous to Proposition~\ref{prop:peak_gain}, using the augmented dynamics~\eqref{eq:dyn_aug_error_filter}.
\end{proof}
The error-bounding system from Section~\ref{sec:peak} is contained as a special case by simply selecting $r_\psi(t)=r(t)$. 
As we see later in the numerical experiments (Sec.~\ref{sec:num}), choosing an appropriate filter $\Psi$ is vital to avoid overly conservative error bounds, however, a systematic optimization of the filter is left for future work. 

\section{Robust predictive control with ROMs}
\label{sec:mpc}
In the following, we derive the proposed MPC formulation, which leverages the ROM~\eqref{eq:sys_ROM} and the error-bounding system~\eqref{eq:filter_error_bounding_system}.

\subsubsection*{Robust prediction}
Simulating the scalar-error bounding system~\eqref{eq:filter_error_bounding_system} requires the filter output $r_\psi(t)$, which also depends on the unknown disturbance $w(t)$~\eqref{eq:filter}. 
Recall that $\|w(t)\|\leq \bar{w}$ with a know constant $\bar{w}\geq 0$. 
We leverage the structure of the filter~\eqref{eq:filter},\eqref{eq:filter_structure} to define an implementable robust prediction:
\begin{subequations} 
\label{eq:full_prediction}
\begin{align}
\dot{x}_{\mathrm{r}}(t)=&A_{\mathrm{r}}x_{\mathrm{r}}(t)+B_{\mathrm{r}}u(t),~x_{\mathrm{r}}(0)=W^\top x_0,\\
\label{eq:full_prediction_filter_dyn}
\dot{\bar{\psi}}(t)=&A_\psi \bar{\psi}(t) +B_\Psi \bar{r}(t),~\bar{\psi}(0)=0,\\
\label{eq:full_prediction_delta_dyn}
\dot{\bar{\delta}}_{\chi}(t)=&-\lambda_\chi\bar{\delta}_{\chi}(t)+\gamma_{\chi}(\|\bar{r}_\psi(t)\|^2+\bar{w}^2),\\
\bar{\delta}_\chi(0)=&\|[(I_{\nxfull}-VW^\top)x_0;~0]\|_{P_\chi}^2,\nonumber\\
\label{eq:full_prediction_delta_z}
\delta_{\mathrm{z}}(t)=&\sqrt{\gamma_\chi\lambda_\chi \bar{\delta}_{\chi}(t)},\\
\bar{r}_\psi(t)=&C_\psi \bar{\psi}(t)+D_\psi \bar{r}(t),\\
\bar{r}(t)=&\begin{bmatrix}
 0 & x_{\mathrm{r}}(t)^\top & u(t)^\top 
\end{bmatrix}^\top,\\
z_{\mathrm{r}}(t)=&C_{\mathrm{r}}x_{\mathrm{r}}(t).
\end{align}
\end{subequations}
Equations~\eqref{eq:full_prediction} depend only on known dynamics and can be efficiently implemented.
\begin{lemma}
\label{lemma:full_error_bound}
The full-order system~\eqref{eq:sys} and the dynamics~\eqref{eq:full_prediction} satisfy
\begin{align}
\label{eq:robust_prediction}
\|z_{\mathrm{r}}(t)-z(t)\|\leq \delta_{\mathrm{z}}(t),~\forall t\geq 0.
\end{align}
\end{lemma}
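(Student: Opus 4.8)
The plan is to connect the implementable prediction~\eqref{eq:full_prediction} to the error-bounding guarantee of Proposition~\ref{prop:filter_peak_gain}, and the main obstacle is that~\eqref{eq:full_prediction} replaces the true (unknown) disturbance $w(t)$ by a worst-case term. The key observation is the special filter structure~\eqref{eq:filter_structure}: the disturbance $w(t)$ does not enter the filter state $\psi(t)$ but is only passed through $D_\psi$. Consequently the \emph{true} filter state $\psi(t)$ generated by the full lumped input $r(t)=[w(t)^\top,x_{\mathrm{r}}(t)^\top,u(t)^\top]^\top$ coincides exactly with the computed $\bar{\psi}(t)$, which is driven by $\bar{r}(t)=[0,x_{\mathrm{r}}(t)^\top,u(t)^\top]^\top$; this is because the $w$-block of $B_\psi$ is zero by~\eqref{eq:filter_structure}. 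Hence $\psi(t)=\bar{\psi}(t)$ for all $t\geq 0$, and $V_\chi(0)=\bar{\delta}_\chi(0)$ by construction.

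First I would establish the bound on the filtered output. Writing $r_\psi(t)=C_\psi\psi(t)+D_\psi r(t)=C_\psi\bar{\psi}(t)+D_\psi\bar{r}(t)+D_\psi[w(t)^\top,0,0]^\top$, and using $\psi=\bar\psi$, I decompose $r_\psi(t)=\bar{r}_\psi(t)+D_\psi[w(t)^\top,0,0]^\top$. By~\eqref{eq:filter_structure}, the disturbance appears in $r_\psi$ only through the pass-through channel $[I_{\nw},0]r_\psi=w$, so the contribution of $w$ to $\|r_\psi(t)\|^2$ is exactly $\|w(t)\|^2\leq\bar{w}^2$, and it is orthogonal to the $\bar{r}_\psi$ part in the sense that $\|r_\psi(t)\|^2=\|\bar{r}_\psi(t)\|^2+\|w(t)\|^2\leq\|\bar{r}_\psi(t)\|^2+\bar{w}^2$. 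This is precisely the quantity appearing in the drive term of~\eqref{eq:full_prediction_delta_dyn}.

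Next I would compare the two scalar ODEs. The true error-bounding system~\eqref{eq:filter_error_bounding_system} satisfies $\dot{\delta}_\chi=-\lambda_\chi\delta_\chi+\gamma_\chi\|r_\psi(t)\|^2$, whereas the implementable one~\eqref{eq:full_prediction_delta_dyn} satisfies $\dot{\bar\delta}_\chi=-\lambda_\chi\bar\delta_\chi+\gamma_\chi(\|\bar{r}_\psi(t)\|^2+\bar{w}^2)$. Since $\gamma_\chi\geq 0$ and $\|r_\psi(t)\|^2\leq\|\bar{r}_\psi(t)\|^2+\bar{w}^2$, the drive term of $\bar\delta_\chi$ dominates that of $\delta_\chi$ pointwise. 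With the matching initial condition $\delta_\chi(0)=\bar\delta_\chi(0)$, a standard comparison-lemma argument (identical in structure to the induction step in the proof of Proposition~\ref{prop:peak_gain}: equality at a time forces $\dot\delta_\chi\leq\dot{\bar\delta}_\chi$) gives $\delta_\chi(t)\leq\bar\delta_\chi(t)$ for all $t\geq 0$.

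Finally I would chain the inequalities. Proposition~\ref{prop:filter_peak_gain} yields $\frac{1}{\lambda_\chi\gamma_\chi}\|z_{\mathrm{e}}(t)\|^2\leq V_\chi(t)\leq\delta_\chi(t)\leq\bar\delta_\chi(t)$, so $\|z_{\mathrm{e}}(t)\|^2\leq\lambda_\chi\gamma_\chi\bar\delta_\chi(t)=\delta_{\mathrm{z}}(t)^2$ by the definition~\eqref{eq:full_prediction_delta_z}. Since $z(t)=z_{\mathrm{r}}(t)+z_{\mathrm{e}}(t)$ from Proposition~\ref{prop:error_dyn}, we have $z_{\mathrm{r}}(t)-z(t)=-z_{\mathrm{e}}(t)$, hence $\|z_{\mathrm{r}}(t)-z(t)\|=\|z_{\mathrm{e}}(t)\|\leq\delta_{\mathrm{z}}(t)$, which is~\eqref{eq:robust_prediction}. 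The main subtlety to handle carefully is the disturbance decoupling: I must verify that~\eqref{eq:filter_structure} genuinely guarantees both $\psi=\bar\psi$ and the clean split $\|r_\psi\|^2\leq\|\bar r_\psi\|^2+\bar w^2$ uniformly over all admissible $w(t)\in\mathcal{W}$, so that the worst-case substitution $\|w(t)\|^2\mapsto\bar{w}^2$ is valid at every time instant.
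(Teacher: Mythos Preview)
Your proposal is correct and follows essentially the same approach as the paper: exploit the filter structure~\eqref{eq:filter_structure} to obtain $\psi=\bar\psi$ and the orthogonal split $\|r_\psi\|^2=\|\bar r_\psi\|^2+\|w\|^2\leq\|\bar r_\psi\|^2+\bar w^2$, then invoke monotonicity (your comparison-lemma step) to get $\delta_\chi\leq\bar\delta_\chi$, and conclude via Proposition~\ref{prop:filter_peak_gain} and~\eqref{eq:full_prediction_delta_z}. Your write-up is in fact more explicit than the paper's, which compresses these steps into three sentences.
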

\begin{proof}
The results follows largely from Proposition~\ref{prop:filter_peak_gain}. 
First, note that the structure of the filter~\eqref{eq:filter}--\eqref{eq:filter_structure} and the "nominal filter" (which neglects the disturbance $w(t)$)~\eqref{eq:full_prediction_filter_dyn}  satisfy 
$r_\psi(t)=\bar{\psi}(t)+[I,~0]^\top w(t)$. Furthermore, $\|w(t)\|\leq\bar{w}$, \eqref{eq:filter_error_bounding_system}, \eqref{eq:full_prediction_delta_dyn} and monotonicity of the scalar error-bounding system ensure
$\bar{\delta}_{\chi}(t)\geq \delta_\chi(t)$, $\forall t\geq 0$.
Lastly, \eqref{eq:robust_prediction} follows from \eqref{eq:peak_filter_error_bound} and \eqref{eq:full_prediction_delta_z}. 
\end{proof}
 
\subsubsection*{Robust constraint satisfaction}
In addition, to the robust bounds on the prediction (Lemma~\ref{lemma:full_error_bound}), we need an efficient formulation to enforce the output constraints~\eqref{eq:constraints}. 
\begin{lemma}
\label{lemma:constraint_tightening}
Suppose the prediction~\eqref{eq:full_prediction} satisfies
\begin{align}
\label{eq:tightend_constraint}
g_j(z_{\mathrm{r}}(t))+L_{\mathrm{g},j}\delta_{\mathrm{z}}(t)\leq 0,~j=0,\dots,n_{\mathrm{c}}.
\end{align}
Then, $z(t)\in\mathcal{Z}$.
\end{lemma}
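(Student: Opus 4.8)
The plan is to exploit the Lipschitz continuity of each constraint function $g_j$ together with the pointwise prediction bound from Lemma~\ref{lemma:full_error_bound}. The goal is to show that the tightened condition \eqref{eq:tightend_constraint} on the nominal reduced-order output $z_{\mathrm{r}}(t)$ forces the true output $z(t)$ to satisfy $g_j(z(t))\leq 0$ for every $j$, which by definition~\eqref{eq:constraint_con} is exactly $z(t)\in\mathcal{Z}$.

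First I would fix an arbitrary time $t\geq 0$ and an arbitrary constraint index $j\in\{0,\dots,n_{\mathrm{c}}\}$. The key inequality is the Lipschitz estimate: since $g_j$ is Lipschitz with constant $L_{\mathrm{g},j}$, we have
\begin{align*}
g_j(z(t))\leq g_j(z_{\mathrm{r}}(t))+L_{\mathrm{g},j}\|z(t)-z_{\mathrm{r}}(t)\|.
\end{align*}
Next I would invoke Lemma~\ref{lemma:full_error_bound}, which gives $\|z_{\mathrm{r}}(t)-z(t)\|\leq\delta_{\mathrm{z}}(t)$. Since $L_{\mathrm{g},j}\geq 0$, this lets me upper-bound the second term to obtain
\begin{align*}
g_j(z(t))\leq g_j(z_{\mathrm{r}}(t))+L_{\mathrm{g},j}\delta_{\mathrm{z}}(t).
\end{align*}
The assumed tightened condition~\eqref{eq:tightend_constraint} states precisely that the right-hand side is $\leq 0$, so $g_j(z(t))\leq 0$ follows immediately. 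Since $t$ and $j$ were arbitrary, this holds for all indices and all times, which yields $z(t)\in\mathcal{Z}$ by~\eqref{eq:constraint_con}.

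I do not anticipate a genuine obstacle here: the argument is a direct chaining of the Lipschitz bound, the prediction guarantee of Lemma~\ref{lemma:full_error_bound}, and the hypothesis. The only point requiring mild care is the application of Lipschitz continuity to produce a \emph{one-sided} bound on $g_j$ rather than a two-sided bound on $|g_j(z)-g_j(z_{\mathrm{r}})|$; using the absolute-value form and then discarding the lower bound is the cleanest route. One should also note that the Euclidean norm appearing in Lemma~\ref{lemma:full_error_bound} is consistent with the norm implicit in the definition of the Lipschitz constants $L_{\mathrm{g},j}$, so that the scalar $\delta_{\mathrm{z}}(t)$ can be substituted directly without a norm-conversion constant.
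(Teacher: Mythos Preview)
Your proof is correct and follows essentially the same route as the paper: apply Lipschitz continuity of $g_j$ to bound $g_j(z(t))$ by $g_j(z_{\mathrm{r}}(t))+L_{\mathrm{g},j}\|z(t)-z_{\mathrm{r}}(t)\|$, invoke the prediction bound~\eqref{eq:robust_prediction} from Lemma~\ref{lemma:full_error_bound}, and conclude via~\eqref{eq:tightend_constraint}. The paper's own argument is exactly this chain, only stated more tersely.
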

\begin{proof}
Given~\eqref{eq:robust_prediction} and \eqref{eq:tightend_constraint}, we have 
\begin{align*}
g_j(z(t))
\leq  g_j(z_{\mathrm{r}}(t))+L_{\mathrm{g},j} \delta_{\mathrm{z}}(t)\leq 0,~
j=0,\dots,n_{\mathrm{c}},
\end{align*}
i.e., $z(t)\in\mathcal{Z}$ with \eqref{eq:constraint_con} and Lipschitz constants $L_{\mathrm{g},j}\geq 0$. 
\end{proof}
The implementation with a Lipschitz constant $L_{\mathrm{g},j}$ is very natural for linear or conic inequality constraints. 
For more general constraints, Condition~(23) can  be replaced by
\begin{align*}
z_{\mathrm{r}}\in\bar{\mathcal{Z}}(\delta_{\mathrm{z}})
:=  \mathcal{Z}\ominus\mathcal{B}_\infty(\delta_{\mathrm{z}}),
\end{align*}
where $\mathcal{B}_{\infty}(\delta) := \{ e \mid \|e\|_{\infty} \le \delta \}$ denotes the infinity-norm ball and $\ominus$ the Pontryagin difference.

\subsubsection*{Reduced-order optimal control problem}
Given the robust prediction bound (Lemma~\ref{lemma:full_error_bound}) and the robust reformulation of the output constraints (Lemma~\ref{lemma:constraint_tightening}), we are ready to present the reduced-order 
formulation of the full-order optimal control problem~\eqref{eq:OCP}:
\begin{subequations}
\label{eq:OCP_ROM} 
\begin{align} 
\label{eq:OCP_ROM_cost}
\min_{u(\cdot)}&\int_0^T \|z_{\mathrm{r}}(t)-z_{\mathrm{ref}}(t)\|_{\infty}+R\cdot \|u(t)\|_\infty+ \delta_{\mathrm{z}}(t)~\mathrm{d}t\\
\label{eq:OCP_ROM_constraints}
\text{s.t. }&\eqref{eq:full_prediction},\eqref{eq:tightend_constraint},~u(t)\in\mathcal{U},~\forall t\in[0,T].
\end{align}
\end{subequations}
We assume that Problem~\eqref{eq:OCP_ROM} admits a feasible solution and we denote an optimal trajectory with a superscript $^\star$ and the minimal cost by $\mathcal{J}^\star$. 
Compared to Problem~\eqref{eq:OCP}, Problem~\eqref{eq:OCP_ROM} does not require simulating the large full-order state $x(t)\in\mathbb{R}^{\nxfull}$. 
In addition, it does not require joint optimization of the worst-case disturbances $w(t)$. 
Instead, we only need to simulate an ordinary differential equation of dimension $\nxreduced+\npsi+1$, where the order of $\nxreduced$ and $\npsi$ can be freely chosen. This results in a drastic reduction in computational complexity; especially as computational complexity scales cubically in the state dimension for long horizons~\cite[Sec.~3.3]{frison2020hpipm}. 
Compared to simply using the ROM~\eqref{eq:sys_ROM}, we additionally predict a bound $\delta_{\mathrm{z}}(t)$ on the prediction error using an error-bounding system~\eqref{eq:full_prediction_delta_dyn} with a dynamical filter~\eqref{eq:full_prediction_filter_dyn}, and we use this bound to tighten the constraint in~\eqref{eq:tightend_constraint} and upper-bound the cost in~\eqref{eq:OCP_ROM_cost}. 
The following theorem shows that the reduced-order optimal control problem~\eqref{eq:OCP_ROM} provides a feasible solution to the (computationally intractable) full-order optimal control problem~\eqref{eq:OCP}. 
\begin{theorem}
\label{thm:OCP}
Suppose Problem~\eqref{eq:OCP_ROM} is feasible.
Then, the returned input trajectory $u^\star(t)$ is a feasible solution to Problem~\eqref{eq:OCP}, i.e., applying this input to the full-order system~\eqref{eq:sys} yields a trajectory that satisfies the constraints~\eqref{eq:constraints} $\forall t\in[0,T]$ for any disturbances $w(t)\in\mathcal{W}$. 
Furthermore, the resulting cost~\eqref{eq:OCP_cost} is upper bounded by $\mathcal{J}^\star$. 
\end{theorem}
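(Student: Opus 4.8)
The plan is to prove the two claims of Theorem~\ref{thm:OCP}---feasibility and the cost bound---by reducing them to the two preceding lemmas, which already do the heavy lifting. The key observation is that Problem~\eqref{eq:OCP_ROM} is constructed so that its constraints and cost are \emph{surrogates} that dominate the true quantities appearing in Problem~\eqref{eq:OCP}, uniformly over all admissible disturbances. Let $u^\star(\cdot)$ denote the optimizer of Problem~\eqref{eq:OCP_ROM}, with associated reduced trajectory $z_{\mathrm{r}}^\star(t)$ and error bound $\delta_{\mathrm{z}}^\star(t)$.

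First I would establish feasibility. Fix an arbitrary disturbance realization $w(\cdot)\in\mathcal{W}$ and let $z(t)$ be the resulting output of the full-order system~\eqref{eq:sys} under input $u^\star(\cdot)$. The input constraint $u^\star(t)\in\mathcal{U}$ holds directly, since it is enforced pointwise in~\eqref{eq:OCP_ROM_constraints} and does not depend on $w$. For the output constraint, I would invoke Lemma~\ref{lemma:full_error_bound}, which guarantees $\|z_{\mathrm{r}}^\star(t)-z(t)\|\leq\delta_{\mathrm{z}}^\star(t)$ for \emph{every} $w(\cdot)\in\mathcal{W}$; this is the crucial point, since the bound $\delta_{\mathrm{z}}^\star(t)$ produced by~\eqref{eq:full_prediction_delta_dyn} already accounts for the worst-case disturbance through the $\bar{w}^2$ term. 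Combining this with the tightened constraint~\eqref{eq:tightend_constraint}, which is feasible by assumption, Lemma~\ref{lemma:constraint_tightening} immediately yields $z(t)\in\mathcal{Z}$ for all $t\in[0,T]$. Since $w(\cdot)$ was arbitrary, the constraints~\eqref{eq:constraints} hold robustly, so $u^\star(\cdot)$ is feasible for Problem~\eqref{eq:OCP}.

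Next I would bound the cost. The quantity to control is the worst-case cost $\max_{w(\cdot)\in\mathcal{W}}\int_0^T\|z(t)-z_{\mathrm{ref}}(t)\|_\infty+R\|u^\star(t)\|_\infty\,\mathrm{d}t$ in~\eqref{eq:OCP_cost}. I would apply the triangle inequality termwise, writing $\|z(t)-z_{\mathrm{ref}}(t)\|_\infty\leq\|z_{\mathrm{r}}^\star(t)-z_{\mathrm{ref}}(t)\|_\infty+\|z(t)-z_{\mathrm{r}}^\star(t)\|_\infty$, and then bound the mismatch term using the norm inequality $\|\cdot\|_\infty\leq\|\cdot\|$ together with Lemma~\ref{lemma:full_error_bound} to get $\|z(t)-z_{\mathrm{r}}^\star(t)\|_\infty\leq\delta_{\mathrm{z}}^\star(t)$. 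The input term is identical in both problems. Integrating over $[0,T]$ shows that the full-order cost is upper bounded, uniformly in $w$, by exactly the reduced-order objective~\eqref{eq:OCP_ROM_cost} evaluated at $u^\star$, which equals $\mathcal{J}^\star$.

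The argument is essentially assembly of existing pieces, so the main subtlety---and the step I would be most careful about---is the interchange of norms in the cost bound and making sure the disturbance-dependence is handled correctly: the bound from Lemma~\ref{lemma:full_error_bound} must hold \emph{pointwise in $t$ for each fixed $w$}, so that taking the maximum over $w$ on the left never exceeds the $w$-independent right-hand side $\mathcal{J}^\star$. I would state explicitly that since $\delta_{\mathrm{z}}^\star(t)$ is a uniform (worst-case) bound independent of the particular realization $w(\cdot)$, the supremum over $\mathcal{W}$ can be passed through without inflating the bound, which is what legitimizes comparing the min-max cost of~\eqref{eq:OCP} against the single deterministic cost $\mathcal{J}^\star$.
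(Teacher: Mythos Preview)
Your proposal is correct and follows essentially the same approach as the paper: feasibility via Lemma~\ref{lemma:constraint_tightening} (which itself rests on Lemma~\ref{lemma:full_error_bound}), and the cost bound via the triangle inequality together with $\|\cdot\|_\infty\leq\|\cdot\|$ and Lemma~\ref{lemma:full_error_bound}. Your write-up is simply more explicit about the input constraint and the uniformity in $w(\cdot)$, which the paper leaves implicit.
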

\begin{proof}
The fact that $u^\star(t)$ is a feasible solution to Problem~\eqref{eq:OCP} and ensures satisfaction of the output constraints follows from Lemma~\ref{lemma:constraint_tightening}. 
The bound on the optimal cost follows from Lemma~\ref{lemma:full_error_bound} with
\begin{align*}
\|z_{\mathrm{r}}(t)-z(t)\|_\infty\leq& \|z_{\mathrm{r}}(t)-z(t)\|\leq \delta_{\mathrm{z}}(t),\\
\|z(t)-z_{\mathrm{ref}}(t)\|_\infty\leq& \|z_{\mathrm{r}}(t)-z_{\mathrm{ref}}(t)\|_\infty+\delta_{\mathrm{z}}(t).
\qedhere
\end{align*}
\end{proof}
\begin{remark}(Receding horizon implementation)
The MPC formulation in Problem~\eqref{eq:OCP_ROM} is conceptually similar to the MOR-based MPC in~\cite{loehning2014model,alora2023robust}, with a ROM and an error-bounding system.
By leveraging this similarity, it is possible to show closed-loop properties for a receding-horizon implementation under small modifications. 
In particular, this requires a terminal set constraint $(x_{\mathrm{r}}(T),\bar{\psi}(T),\delta_\chi(T))\in\mathcal{X}_{\mathrm{f}}$, with an offline designed positive invariant set $\mathcal{X}_{\mathrm{f}}\subseteq\mathbb{R}^{\nxreduced+\npsi+1}$. 
Furthermore, the initial state constraint needs to be replaced by a
sufficient condition for $V_\chi(t)\leq \delta_\chi(t)$.
In case $x(t),\psi(t)$ can be measured, this can be directly implemented. 
For the more general case, an observer may be needed~\cite{schwenkel2025output}.  
Given these modifications, the MPC formulation can be used in a receding-horizon fashion and inherits all the theoretical guarantees from Theorem~\ref{thm:OCP}.
\end{remark}
\begin{remark}(Closed-loop prediction)
\label{rk:feedback}
The assumption of open-loop stability and open-loop inputs can be relaxed using a linear feedback $u(t)=Kx(t)+v(t)$~\cite{schwenkel2022model,schwenkel2025output}. 
The matrix $K$ can be jointly optimized offline to minimize the peak-to-peak gain in Proposition~\ref{prop:peak_gain} using the algorithm in~\cite{schwenkel2025multi}.  
In this case, the input cost and input constraints in Problem~\eqref{eq:OCP_ROM} need to include a term $\delta_{\mathrm{u}}$ that accounts for the maximal deviation between the nominal planned input and the actually applied input, similar to $\delta_{\mathrm{z}}$.  
\end{remark}

\section{Discussion \& qualitative comparison}
\label{sec:discussion}
In the following, we discuss qualitative properties of the proposed method and existing approaches. 

\subsection{Proposed approach}
\subsubsection*{Properties of proposed approach}
First, we highlight three main properties of the proposed methodology. 
\begin{enumerate}[label=\alph*)]
\item We provide rigorous guarantees on performance and constraint satisfaction for large-scale systems; 
\label{property_1}
\item The computational complexity of the Problem~\eqref{eq:OCP_ROM} depends on the user-chosen dimension of the ROM $\nxreduced$ instead of the full-order system $\nxfull\gg \nxreduced$;
\label{property_2}
\item The robust prediction error bound $\delta_{\mathrm{z}}$ depends on the filtered optimized trajectory $\bar{\psi}$ and enters both the cost and the tightened constraints in Problem~\eqref{eq:OCP_ROM}. Hence, the optimizer intrinsically favours input trajectories that lead to smaller bounds $\delta_{\mathrm{z}}$ on the prediction error.
\label{property_3}
\end{enumerate}
\subsubsection*{Computational demand}
Tailored MPC solvers scale linearly with the prediction horizon and cubic in the state and input dimension (cf. \cite[Sec.~3.3]{frison2020hpipm}). 
The proposed approach~\eqref{eq:OCP_ROM} can be implemented with the state $(x_{\mathrm{r}},\psi,\delta_\chi)\in\mathbb{R}^{\nxreduced+\npsi+1}$ and the optimized input given by the control $u\in\mathbb{R}^{\ninput}$ and one auxiliary variable to bound  $\|\bar{r}_\psi\|^2$ efficiently.  
Hence, the original full-order problem~\eqref{eq:OCP} is expected to be  $$\left(\dfrac{\nxfull+\ninput}{\nxreduced+\npsi+1+\ninput+1}\right)^3$$ times  computationally more expensive to solve. Even for the medium-scale problem in Section~\ref{sec:num}, this corresponds to a speed-up of a factor of over $100$ using the proposed approach.\footnote{%
In addition, reduced-order models tend to have reduced stiffness, which further enhances computational tractability~\cite{muntwiler2024stiffness}. However, Problem~\eqref{eq:OCP_ROM} is no longer a quadratic program, which also increases complexity.}
\subsubsection*{Filter design}
The addition of the dynamical filter in Section~\ref{sec:filter} further generalizes the dependence of the error bound on the optimized trajectory~\ref{property_3}. Specifically, the approach in Section~\ref{sec:peak} determines a bound that is proportional to the signal normal of $r$. 
The filter changes this to a norm on a filtered version of this signal, which provides more flexibility. 
In particular, for most reduced-order models, the model error depends largely on high-frequency excitation of unmodelled dynamics, while slow changes are well captured by the reduced-order model. This can be encoded by using a high-pass filter $\Psi$, where the cut-of frequency can be decided based on the unmodelled dynamics. 
In the numerical example (Sec.~\ref{sec:num}), we see that such a high-pass filter can capture the fact that the ROM is more accurate if the input $r(t)$ is smoother. Nonetheless, a more automatic choice of the filter $\Psi$ is part of current research.


\subsection{Qualitative comparison}
Next, we provide a qualitative comparison to existing approaches.
\subsubsection*{Uniform error bounds}
The majority of existing ROM-based MPC formulations~\cite{sopasakis2013constrained,kogel2015robust,lorenzetti2022linear,wang2022tube} achieve properties~\ref{property_1}--\ref{property_2}, but not \ref{property_3}. 
Specifically, these approaches typically impose compact constraints to limit the operation range of the ROM and then derive uniform bounds on the model mismatch. 
For the numerical comparison, we implement an approach based on the methodology in~\cite{lorenzetti2022linear}.
This approach derives uniform polyotpic bounds on $r(t)$ for  discrete-time dynamics and computes the exact worst-case error $z_{\mathrm{e}}(t)$ for a finite-horizon with a linear program. 
For the polytopic bounds on $r(t)$, we consider $\mathcal{W},\mathcal{U}$ polytopic and, as suggested in~\cite[Sec.~VI.C.1)]{lorenzetti2022linear}, we compute a hyperbox for $x_{\mathrm{r}}(t)$ using another linear program.

\subsubsection*{Error-bounding systems for MOR-based MPC}
Error-bounding systems similar to~\eqref{eq:error_bounding_system} have also been derived and leveraged for ROMs in~\cite{loehning2014model,alora2023robust,wirtz2014posteriori}. 
The key difference is how the constants, in particular constants comparable to $\lambda>0$ in~\eqref{eq:error_bounding_system}, are derived. 
In~\cite{wirtz2014posteriori}, $\lambda$ is given by the one-sided Lipschitz constant of $A$. 
In~\cite{alora2023robust}, a quantity comparable to $\lambda$ is chosen as the slowest eigenvalue of $A$. 
In~\cite{loehning2014model}, it is discussed that the choice of $\lambda$ is a trade-off and can be evaluated by using a Lyapunov equation, but no systematic design is suggested. 
These approaches do not provide a systematic fashion to minimize the error bounds (Sec.~\ref{sec:peak}). 
In addition, none of these approaches can leverage filtered signals (Sec.~\ref{sec:filter}).  
We will see in the numerical comparison that both differences have a significant impact on conservatism. 
In the numerical comparison, we implement the approach in~\cite{loehning2014model} by choosing $\lambda$ close to the slowest eigenvalue. 

\subsubsection*{IQC-based analysis}
ROMs yield dynamic uncertainty $\Delta$, which can be analysed using IQCs. 
Peak-to-peak analysis using IQCs has been derived in~\cite{schwenkel2023robust,schwenkel2025multi} and utilized for MPC in~\cite{schwenkel2022model,schwenkel2025output}. 
These results also yield a scalar error-bounding system similar to Proposition~\ref{prop:peak_gain}. 
IQCs characterize the dynamic uncertainty using filters, however, they do \emph{not} leverage filtered inputs in the way we proposed in Section~\ref{sec:filter}\ifbool{arxiv}{. Appendix~\ref{app:IQC} adapts existing 
IQC results~\cite{schwenkel2023robust} to the considered continuous-time case and provides more discussion.}{, see also~\cite[App.~A]{kohler2025ROM} for more discussion.}

\section{Numerical comparisons}
\label{sec:num} 
We first describe the considered problem and highlight issues in na\"ively applying a ROM-based MPC. 
Then we showcase the performance of the proposed ROM-based MPC and provide comparisons to existing methods. 
Implementation is done in Matlab on a standard Laptop. 
Code is open-source available:\\
{\small\url{https://github.com/KohlerJohannes/ROM_MPC_ECC}}
\vspace{1mm}
\subsubsection*{Setup}
We consider a chain of $N=50$ mass-spring-damper systems, where the $u(t)\in\mathbb{R}$ is the force on the last mass, $z(t)\in\mathbb{R}$ is the position of the first mass, see Figure~\ref{fig:illustration}. 
Each element has mass $1$, spring-constant $10$, damping-constant $20$, resulting in under-damped system of dimension $\nxfull=100$. 
We consider no disturbances, i.e., $w(t)=0$, and focus on the error resulting from ROMs.
\begin{figure}[t]
\includegraphics[width=0.48\textwidth]{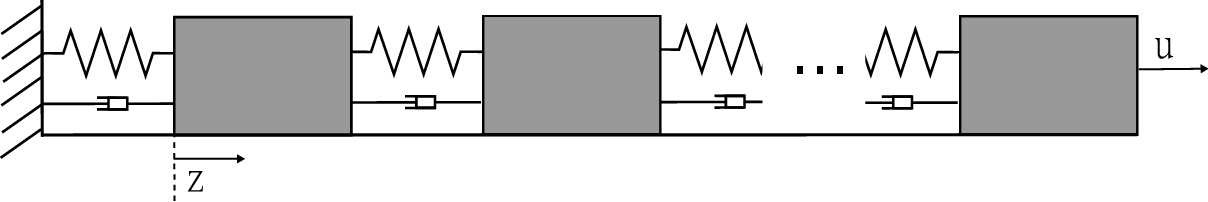}
\caption{Numerical example: chain of $N=50$ masses connected by spring and damper elements with first position $z$ and control input $u$.}
\label{fig:illustration}
\end{figure}

\subsubsection*{Na\"ive ROM-based MPC}
We use a ROM of dimension $\nxreduced=8$, given by a two dimensional lumped model and the slowest $6$ eigenmodes. 
First, we implement a na\"ive ROM-based MPC that neglects the mismatch.
The optimized trajectory and true full-order simulation can be seen in Figure~\ref{fig:ROM_nom}.
There is a rather large difference between the two trajectories, which also leads to constraint violations.  
\begin{figure}[t]
\includegraphics[width=0.48\textwidth]{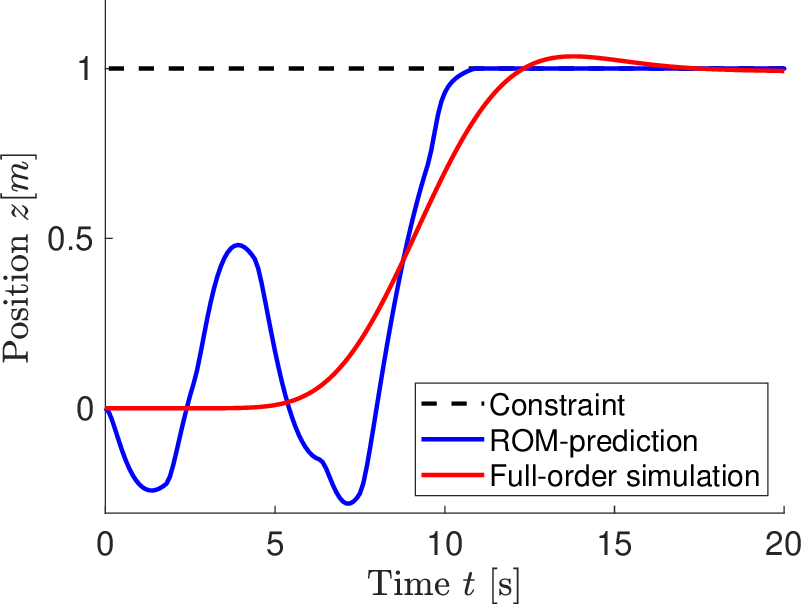}
\caption{Na\"ive ROM-based MPC: Trajectory optimized with nominal ROM (blue) shows significant deviation from full-order trajectory (red), and violates the constraints (black), since the structural model mismatch is not accounted for.}
\label{fig:ROM_nom}
\end{figure}

\subsubsection*{Proposed ROM-based MPC}
Next, we implement the proposed MPC formulation~\eqref{eq:OCP_ROM}.
The ROM is accurate for slow operation. 
We use this knowledge, by choosing the filter $\Psi$~\eqref{eq:filter} as a first-order high-pass with $n_\psi=\nxreduced+\ninput=9$.
In addition, the filter normalizes the inputs. Normalization and frequency were chosen heuristically.
Offline, we solve Problem~\eqref{eq:LMI_peak_filter}, where we consider $10$ equally spaced values of $\lambda_\chi>0$, which took $206$~[s].

The constraints are given by $\mathcal{U}=[-10^2,10^2]$ and $\mathcal{Z}=\{z|~z\leq 1\}$. 
The initial condition is $x_0=0$ and the desired reference is $z_{\mathrm{ref}}(t)=1$, i.e., the reference lies on the boundary of the constraint set. 
We use a horizon of $T=300~[s]$, an input penalty of $R=10^{-3}$. 
We implement the MPC~\eqref{eq:OCP_ROM}  by discretizing the linear dynamics with a sampling time of $2~[s]$, where we introduce additional decision variables to upper bound the quadratic term $\|r_{\psi}(t)\|^2$ in~\eqref{eq:full_prediction_filter_dyn}. 
Solving Problem~\eqref{eq:OCP_ROM} for a given initial state took about $0.7$~[s]. 

The optimized trajectory can be seen in Figure~\ref{fig:ROM_OCP}. 
The proposed approach provides a robust prediction of the full-order system and ensures constraint satisfaction and convergence to the reference. 
The mismatch between the ROM and the full-order model is significantly smaller compared to Figure~\ref{fig:ROM_nom}. 
This is an indirect consequence of Property~\ref{property_3}: the optimal solution 
yields small prediction errors since a robust bound on the prediction error $\delta_{\mathrm{z}}(t)$ is included in Problem~\eqref{eq:OCP_ROM}. 
The formulation also enables operation close to the constraints by optimizing inputs that yield vanishing small prediction errors. 
Such effects cannot be achieved when uniform error bounds are employed, which is standard in  the ROM-based MPC literature~\cite{sopasakis2013constrained,kogel2015robust,lorenzetti2022linear,wang2022tube}. 

\begin{figure}[t]
\includegraphics[width=0.48\textwidth]{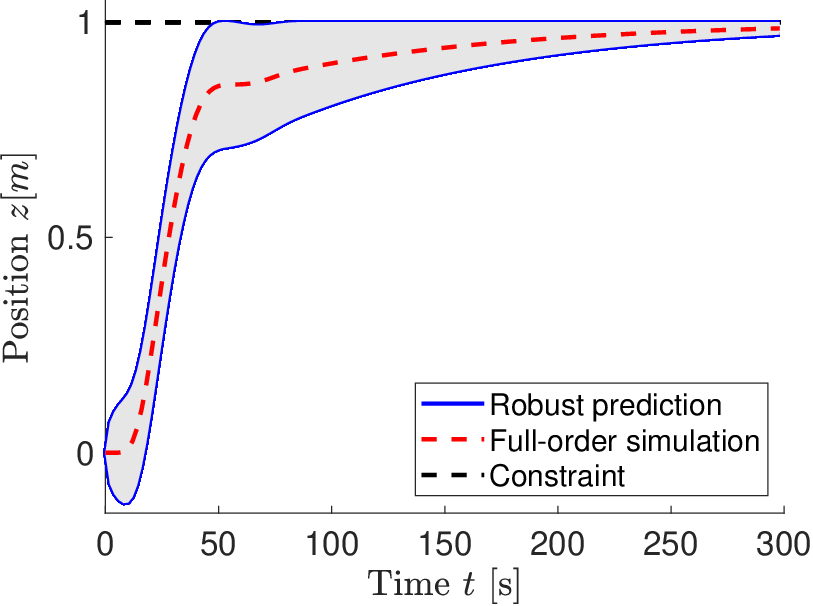}
\caption{Proposed robust ROM-based MPC~\eqref{eq:OCP_ROM}: Full-order simulation (red) contained in robust ROM-based prediction (grey shaded area with blue boundary), and below the constraint (black, dashed).}
\label{fig:ROM_OCP}
\end{figure}

\subsubsection*{Comparison}
We compare the conservatism of different robust prediction methods. 
We utilize the optimized trajectory seen in Figure~\ref{fig:ROM_OCP} and evaluate the size of the predicted reachable set around the ROM prediction for the following methods:
\begin{enumerate}[label=\alph*)]
\item \emph{Uniform:} Input-independent error bounds using compact constraints~\cite{lorenzetti2022linear};
\item  \emph{Input-dependent:} Error-bounding system from~\cite{loehning2014model};
\item  \emph{Peak}: Proposed error-bounding system (Sec.~\ref{sec:peak});
\item  \emph{Peak-filter}: Proposed error-bounding system using additional filters (Sec.~\ref{sec:filter}).
\end{enumerate}
The results are shown in Figure~\ref{fig:compare}. 
The proposed method reduces conservatism by approximately  four orders of magnitude compared to~\cite{lorenzetti2022linear} and \cite{loehning2014model}. 
When compared with the error-bounding system in~\cite{loehning2014model}, both minimizing the peak-to-peak gain and adding a suitable filter reduce conservatism by several orders of magnitude. 
Note that both \emph{Input-dependent}~\cite{loehning2014model} and \emph{Peak} (Sec.~\ref{sec:peak}) depend on the magnitude of the residual $\|r(t)\|$ and the large difference is primarily due the choice of the Lyapunov function used to sequentially propagate uncertainty.
In comparison to~\cite{lorenzetti2022linear}, the uniform bound is conservative, as a large model-mismatch is considered even during steady-state operation. Furthermore, in this example the worst-case mismatch is excessively large, making robust designs infeasible.  
Only \emph{Peak-filter} (Sec.~\ref{sec:filter}) yields a bound $\delta_{\mathrm{z}}(t)$ that decays to zero, since the high-passed filtered signal $r_\psi(t)$ is zero at steady-state operation.
\begin{figure}[t]
\includegraphics[width=0.48\textwidth]{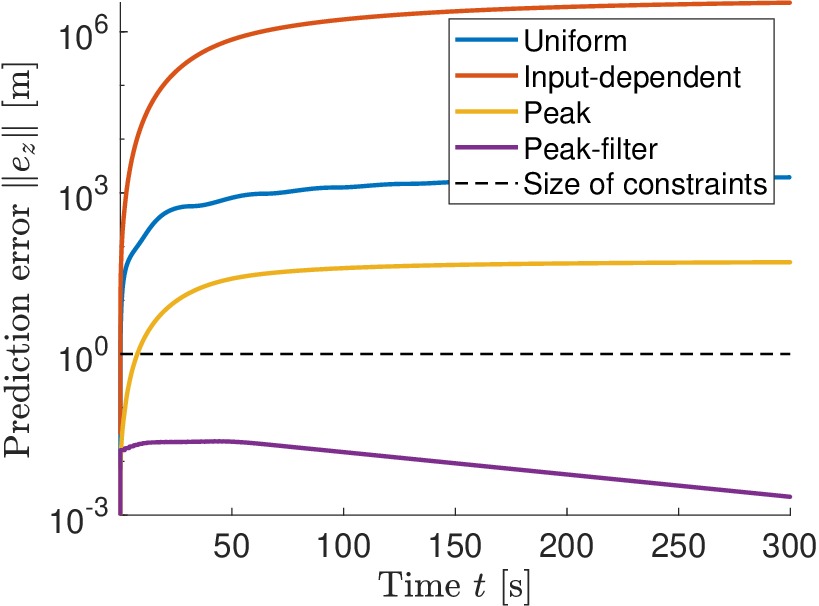}
\caption{Comparison of prediction error bounds of ROM. 
The dashed line indicates the size of the constraint set $\mathcal{Z}$, i.e., alternative methods provide bounds that are significantly larger than the constraint set.}
\label{fig:compare}
\end{figure}

\section{Conclusion}
We provided an MPC design for large-scale linear systems using a reduced-order model that enables efficient implementation while providing guarantees on constraint satisfaction and performance. 
Our key technical contribution is a scalar error-bounding system that provides reliable bounds on the prediction error of the reduced-order model based on a filtered predicted trajectory.
Numerical results show that the method can reduce conservatism by orders of magnitude.

Future work will focus on improving the offline design by systematically optimizing both the reduced-order model and the filter, while enhancing scalability to enable hardware experiments with soft robots.

\bibliographystyle{IEEEtran}
\bibliography{Literature}

\ifbool{arxiv}{%
\appendix
\subsection{IQC-based peak-to-peak reachability analysis}
\label{app:IQC}
In the following, we provide a general peak-to-peak gain analysis for robust reachability of linear uncertain systems characterised using IQCs. 
The following exposition is analogous to the discrete-time results in~\cite{schwenkel2023robust}. We first consider a general setup of uncertain dynamics characterized by IQCs, provide a reachability result for this general problem, and then discuss application to ROMs.
\subsubsection*{Setup}
We consider a linear uncertain system of the form
\begin{subequations} 
\label{eq:dynamics_IQC}
\begin{align}
\dot{\xi}(t)=&A_\xi \xi(t)+B_\xi w_\xi(t),~\xi(0)=\xi_0,\\
y_\xi(t)=&C_\xi \xi(t)+D_\xi w_\xi(t),\\
z_{\mathrm{e}}(t)=&C_{\mathrm{z}} \xi(t).
\end{align}
\end{subequations}
The state $\xi(t)\in\mathbb{R}^{n_\xi}$ typically encompasses the state of the prediction model and additional filters. 
The generalized disturbances $w_\xi(t)$ encompasses all external signals, including known control inputs, bounded disturbances, and error from dynamic uncertainty. 
Known behaviour of the external signals is captured through the output $y_{\xi}(t)$, by assuming that it satisfies an $\alpha$-IQC~\cite{hu2016exponential}: 
There exist known (indefinite) matrices $M,X$, such that 
\begin{align}
 \label{eq:IQC_def}
 \int_0^t e^{2\alpha \tau} y_\xi(\tau)^\top M y_\xi(\tau) \mathrm{d}\tau +e^{2\alpha t} \xi(t)^\top X \xi(t)\geq 0 \quad \forall t\geq 0. 
\end{align}
\subsubsection*{Robust peak-to-peak gain analysis}
\begin{proposition}
\label{prop:IQC}
Suppose the following optimization problem admits a feasible solution $P_\xi$, $\Gamma\succeq 0$, $\alpha\geq 0$:
\begin{subequations}
\label{eq:IQC_LMI}
\begin{align} 
\label{eq:IQC_LMI_alpha}
& \left[ \vphantom{
 \begin{bmatrix}
 x \\ x \\ x \\ x
 \end{bmatrix}
 } 
 \star \right]^{\top} 
 \left[
 \begin{array}{cc|cc}
 2 \alpha P_\xi & P_\xi & 0 & 0 \\
 P_\xi & 0 & 0 & 0 \\
 \hline
 0 & 0& M & 0 \\
 0& 0 & 0&-\Gamma \\
 \end{array}
 \right] 
 \begin{bmatrix}
 I_{n_\xi}& 0 \\
 A_\xi & B_\xi\\
 C_\xi& D_\xi \\
 0 & I_{n_{\mathrm{w}_\xi}}
\end{bmatrix}
\prec 0,\\
\label{eq:IQC_LMI_output}
& \begin{bmatrix}
\gamma I_{\nz}&C_{\mathrm{z}}\\
C_{\mathrm{z}}^\top & \alpha (P_\xi-X)
\end{bmatrix}\succeq 0,~P_\xi-X\succeq 0.
\end{align}
\end{subequations}
Then, for any signal $w_\xi(t)$ satisfying~\eqref{eq:IQC_def}, it holds that
\begin{align}
\label{eq:IQC_error_bound}
\dfrac{1}{\alpha\gamma}\|z_{\mathrm{e}}(t)\|^2\leq \xi(t)^\top (P_\xi-X)\xi(t) \leq \delta_\xi(t),~\forall t\geq 0,
\end{align}
with
\begin{align}
\label{eq:IQC_delta_dyn}
\dot{\delta}_\xi(t)=-2\alpha\delta_\xi(t)+ \|w_\xi(t)\|_\Gamma^2,~\delta_\xi(0)\geq \xi_0^\top (P_\xi-X)\xi_0. 
\end{align}
\end{proposition}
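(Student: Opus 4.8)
The plan is to mirror the two-part structure already established in the proof of Proposition~\ref{prop:peak_gain}, adapting each step to the IQC setting. The target inequality~\eqref{eq:IQC_error_bound} chains three quantities: a scaled output norm, the \emph{modified} storage $\xi(t)^\top(P_\xi-X)\xi(t)$, and the scalar comparison variable $\delta_\xi(t)$. First I would extract a dissipation inequality from the LMI~\eqref{eq:IQC_LMI_alpha}: multiplying it from the left and right by $[\xi(t)^\top,\,w_\xi(t)^\top]$ and its transpose yields a pointwise bound of the form
\begin{align*}
2\alpha\,\xi^\top P_\xi\xi+\tfrac{d}{dt}\!\left(\xi^\top P_\xi\xi\right)+y_\xi^\top M y_\xi-\|w_\xi\|_\Gamma^2\leq 0.
\end{align*}
The essential difference from the unfiltered case is the extra term $y_\xi^\top M y_\xi$, which couples the dissipation inequality to the IQC~\eqref{eq:IQC_def}. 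The trick is to absorb this term using the IQC: defining $S(t):=\xi(t)^\top(P_\xi-X)\xi(t)$, I would show that $e^{2\alpha t}$-weighting the dissipation inequality and integrating, then adding the nonnegative IQC quantity from~\eqref{eq:IQC_def}, cancels the indefinite $M$-term and leaves a clean comparison for $S(t)$.

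The second part is the output bound. Applying a Schur complement to~\eqref{eq:IQC_LMI_output} gives $-\alpha(P_\xi-X)+\tfrac{1}{\gamma}C_{\mathrm{z}}^\top C_{\mathrm{z}}\preceq 0$; sandwiching with $\xi(t)$ yields $\tfrac{1}{\alpha\gamma}\|z_{\mathrm{e}}(t)\|^2\leq S(t)$, which is the first inequality in~\eqref{eq:IQC_error_bound}. This step is a direct transcription of the corresponding argument in Proposition~\ref{prop:peak_gain}, with $P$ replaced by $P_\xi-X$ and $\lambda$ by $\alpha$. The comparison $S(t)\leq\delta_\xi(t)$ then follows from the same monotonicity/comparison-lemma argument used before: the initial condition $\delta_\xi(0)\geq S(0)$ holds by assumption, and whenever $S(t)=\delta_\xi(t)$ the dynamics~\eqref{eq:IQC_delta_dyn} dominate $\dot S(t)$ because the absorbed dissipation inequality reads $\dot S(t)\leq-2\alpha S(t)+\|w_\xi(t)\|_\Gamma^2$.

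The main obstacle I anticipate is handling the IQC term cleanly, since~\eqref{eq:IQC_def} is an integral (non-pointwise) constraint while the dissipation inequality is pointwise. The natural remedy is to define the augmented comparison through the $e^{2\alpha t}$-weighting so that the IQC integral and the integrated dissipation inequality can be summed directly; the weighting is precisely why $X$ enters the modified storage $P_\xi-X$ rather than $P_\xi$ alone. Concretely, I would verify that
\begin{align*}
\tfrac{d}{dt}\!\left(e^{2\alpha t}S(t)\right)\leq e^{2\alpha t}\|w_\xi(t)\|_\Gamma^2
\end{align*}
holds pointwise once the IQC contribution is incorporated, after which integration and the comparison lemma finish the argument. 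Because Proposition~\ref{prop:filter_peak_gain} already established the analogous filtered result, I expect the bookkeeping to go through with the IQC playing the role that the filtered-signal structure played there; the only genuinely new ingredient is the sign management of the indefinite matrices $M$ and $X$.
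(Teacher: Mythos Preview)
Your outline is largely aligned with the paper's proof: the dissipation inequality extracted from~\eqref{eq:IQC_LMI_alpha}, the Schur-complement argument for the output bound, and the $e^{2\alpha t}$-weighting are all exactly what the paper does. The gap is in how you close the comparison $S(t)\le\delta_\xi(t)$. You assert that ``the absorbed dissipation inequality reads $\dot S(t)\leq-2\alpha S(t)+\|w_\xi(t)\|_\Gamma^2$,'' equivalently that $\tfrac{d}{dt}\bigl(e^{2\alpha t}S(t)\bigr)\le e^{2\alpha t}\|w_\xi(t)\|_\Gamma^2$ holds \emph{pointwise}. This cannot be obtained: the IQC~\eqref{eq:IQC_def} is a purely integral constraint, and adding it to the integrated dissipation inequality only produces an \emph{integral} bound on $e^{2\alpha t}S(t)$, not a differential one. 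Concretely, $\dot S(t)=\dot V_\xi(t)-\tfrac{d}{dt}\bigl(\xi^\top X\xi\bigr)$, and the IQC supplies no pointwise information about the second term; hence the comparison-lemma step from Proposition~\ref{prop:peak_gain} (``whenever $S=\delta_\xi$ \dots'') does not transfer.

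The paper's resolution is to abandon the differential comparison entirely. After integrating the weighted dissipation inequality and invoking~\eqref{eq:IQC_def}, one obtains
\[
e^{2\alpha t}S(t)\ \le\ V_\xi(0)+\int_0^t e^{2\alpha\tau}\|w_\xi(\tau)\|_\Gamma^2\,\mathrm{d}\tau.
\]
Separately, the linear scalar ODE~\eqref{eq:IQC_delta_dyn} is solved explicitly,
\[
e^{2\alpha t}\delta_\xi(t)\ =\ \delta_\xi(0)+\int_0^t e^{2\alpha\tau}\|w_\xi(\tau)\|_\Gamma^2\,\mathrm{d}\tau,
\]
and the two integral expressions are compared directly. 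No pointwise control of $\dot S$ is ever asserted. You correctly anticipated the obstacle (``an integral (non-pointwise) constraint''), but your proposed remedy still culminates in a pointwise claim; the fix is to stay integral throughout and compare closed-form expressions.
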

\begin{proof}
Denote $V_\xi(t)=\xi(t)^\top P_\xi \xi(t)$. 
Multiplying~\eqref{eq:IQC_LMI_alpha} from left and right by $[\xi(t)^\top,~w_\xi(t)^\top]$ and its transposed yields
\begin{align*}
2\alpha V_\xi(t)+\dot{V}_\xi(t)+y_\xi(t)^\top M y_\xi(t)-\|w_\xi(t)\|_\Gamma^2\leq 0.
\end{align*}
Multiplying the expression by $\exp(2\alpha t)$ and integrating from $0$ to $t$ yields
\begin{align*}
&e^{2\alpha t}V_\xi(t)-V_\xi(0)\\
\leq& \int_0^t e^{2\alpha \tau} (\|w_\xi(\tau)\|_{\Gamma}^2-y_\xi(\tau)^\top M y_\xi(\tau)) \mathrm{d}\tau\\
\stackrel{\eqref{eq:IQC_def}}{\leq} & \int_0^t e^{2\alpha \tau}\|w_\xi(\tau)\|_{\Gamma}^2 \mathrm{d}\tau+ e^{2\alpha t}\xi(t)^\top X \xi(t). \end{align*}
Integrating the dynamics~\eqref{eq:IQC_delta_dyn} yields
\begin{align}
\delta_\xi(t)\leq e^{-2\alpha t}\delta_\xi(0)+\int_0^te ^{-2\alpha(t-\tau)}\|w_\xi(\tau)\|_\Gamma^2 \mathrm{d}\tau.
\end{align}
Comparing the two expressions, yields the second inequality in~\eqref{eq:IQC_error_bound}. 
Applying a Schur complement to~\eqref{eq:IQC_LMI_output} yields
\begin{align*}
-\alpha (P_\xi-X)+\dfrac{1}{\gamma}C_{\mathrm{z}}^\top C_{\mathrm{z}}\preceq 0.
\end{align*}
Multiplying from left and right by $\xi(t)$ and its transposed yields the first inequality in~\eqref{eq:IQC_error_bound}.
\end{proof}
\subsubsection*{Relation to robust prediction with ROMs}
In the following, we discuss how Proposition~\ref{prop:IQC} relates to the results in Propositions~\ref{prop:peak_gain}-\ref{prop:filter_peak_gain}. 
First, if we neglect the IQC~\eqref{eq:IQC_def} by setting $M=0$, $X=0$, we can recover Proposition~\ref{prop:peak_gain} by choosing $\xi$ as the error $e\in\mathbb{R}^{\nxfull}$, $w_\xi=r$, and $\Gamma=\gamma I_{n_{\mathrm{w}_\xi}}$. 
The IQC~\eqref{eq:IQC_def} can also provide a characterization of the unmodelled dynamics $\Delta$~\eqref{eq:error_dyn} using dynamical filters. 
Thus, Problem~\eqref{eq:IQC_LMI} may be lower dimensional, which would also facilitate the scalability of the offline design. 
In this case, semi-definiteness of  $\Gamma\succeq 0$ is important as many components of $w_\xi(t)$ cannot be predicted, but $\|w_\xi(t)\|_{\Gamma}$ is required to implement~\eqref{eq:IQC_delta_dyn}.  
Despite the fact that IQCs leverage dynamical filters, the filter-based approach derived in Section~\ref{sec:filter} is in fact \emph{not} a special case.
While IQCs use filters to described the model error, our approach uses filters to re-define the peak-to-peak norm. 
We except that a more natural unification of these two approaches is possible, which is, however, beyond the scope of this paper.}{}
\end{document}